\newtheorem{theorem}{Theorem}
\newtheorem{definition}{Definition}
\newtheorem{lemma}{Lemma}
\newtheorem{case}{Case}[subsection]
\newtheorem{axiom}{Axiom}
\newtheorem{assumption}{Assumption}
\newenvironment{amatrix}[1]{%
  \left[\begin{array}{@{}*{#1}{c}|c@{}}
}{%
  \end{array}\right]
}
\newcommand{\R}{\mathbb{R}}
\newcommand{\bra}[1]{\left[#1\right]}
\newcommand{\pa}[1]{\left(#1\right)}
\newcommand{\set}[1]{\left\{ #1 \right\}}
\newcommand{\abs}[1]{\left|#1\right|}
\newcommand{\Q}{\mathbb{Q}}
\newcommand{\Z}{\mathbb{Z}}
\newcommand{\N}{\mathbb{N}}
\newcommand{\C}{\mathbb{C}}
\newcommand{\comp}[1]{\overline{#1}}
\newcommand{\B}[1]{\boldsymbol{#1}}
\newcommand{\tr}{\operatorname{tr}}
\newcommand{\mb}{\mathbb}
\title{The Monotone Priority System: Foundations of Contract-Specific Sequencing}
\author{Naveen Durvasula}
\affil{Columbia University \& Ritual}
\date{January 2026}
\begin{document}

\maketitle

\begin{abstract}
    Modern blockchain applications benefit from the ability to specify sequencing constraints on the transactions that interact with them. This paper proposes a principled and axiomatically justified way of adding sequencing constraints on smart contract function calls that balances expressivity with the tractability of block production. Specifically, we propose a system in which contract developers are allowed to set an integer global priority for each of their calls, so long as that the call's chosen priority is no higher than the priority of any of its referenced calls. Block builders must then simply sequence transactions in priority order (from high to low priority), breaking ties however they would like. We show that this system is the unique system that satisfies five independent axioms.
\end{abstract}

\section{Introduction}
\subsection{Contract-Specific Sequencing}

A blockchain network maintains a public and tamper-resistant state that is used to keep track of information with restricted mutability, such as account balances. The network's state is updated by transactions that are submitted by users. A transaction may be a payment, or more generally, any state transition that is compliant with the blockchain's execution semantics. As modern blockchain protocols support Turing-complete execution, a transaction may make a complex sequence of function calls as it is executed. Transactions are batched into blocks due to the blockchain's underlying consensus algorithm. Each block is an ordered sequence of transactions that additionally satisfies some validity constraints that are set by the protocol. Standard block validity constraints include limits on the aggregate network resources utilized by the transactions within the block, and requirements that constituent transactions be well-formed \cite{wood2014ethereum}.

While blocks must adhere to these validity constraints, there is typically a single block producer who retains broad freedom to select which transactions are included in the block, and the order in which they are executed. The profit that the block producer can obtain by making use of this freedom is known as MEV (Maximum Extractable Value), and is the subject of a rich line of literature spanning multiple disciplines -- see e.g. \cite{heimbach2023ethereum, bahrani2024transaction, malkhi2022maximal, kelkar2020order} and references therein. It is now understood that there may be value in allowing an application to directly impose additional constraints on the ordering of the transactions within a block that interact with that application's smart contract over the course of its execution. Such constraints can allow applications to improve their user experience by mitigating the negative effects induced by adversarial transactions that are submitted by other users (or a profit-maximizing block producer) \cite{sorellalabs2024neweraass, eclipselabs2024deepdiveass}. In particular it's emerged as a crucial feature for decentralized exchanges that clear using order books\footnote{For the curious reader, we give an extended explanation of how additional sequencing constraints can improve user experience in a decentralized order book. Suppose there is a price shock that causes the price of the asset being traded on the order book to drop by a large amount. Standing buy orders now remain in the book at a price that is higher than the users who submitted them want to pay. This induces a race between those with outstanding orders who would like to cancel them, versus those who want to fill those orders at a stale higher price. The outstanding orders typically belong to market-makers who provide liquidity for the exchange by serving as the counter-party to the typical user's trades. It's therefore widely believed that exchanges can be more successful by prioritizing cancellations, as providing a better experience for market-makers is considered more important than allowing an arbitrageur to fill a stale order. Such prioritization has similarly been proposed for traditional (centralized) order books \cite{baldauf2020high}. Contract-specific sequencing would allow the decentralized order book to directly impose the constraint that cancellations are prioritized over fill orders within a block, as opposed to allowing a profit-maximizing block producer to ultimately decide the victor. Indeed, some of the most popular decentralized exchanges already do this in ad-hoc ways \cite{hyperliquid2024latency_tx_ordering, cavey2025ace_amqs}.} \cite{hyperliquid2024latency_tx_ordering, cavey2025ace_amqs, yakovenko2025internet_capital_markets_roadmap}. The goal of this paper is to provide a principled and axiomatically justified approach to implementing a system that allows smart contract developers to impose additional sequencing constraints. 

There are a variety of different approaches one could take in allowing contracts to impose sequencing constraints. On one extreme, we could allow contracts to specify arbitrary sequencing constraints over any two transactions. While this gives contract developers complete control over the sequencing process, it makes the process of block building wholly computationally intractable. Indeed, finding a valid block that includes a collection of transactions amounts to solving a general Boolean satisfiability problem, and a valid block need not even exist! On the other extreme is the status quo, where block building remains tractable but contract developers have no ability to directly sequence transactions. The key challenge in designing a contract-specific sequencing system is balancing the tension between the expressivity of the system (i.e. how rich are the contract developers' choices over sequencing constraints) and tractability (i.e. how easy is it for block producers to build a block that satisfies these additional constraints). Even supposing that a contract can only impose constraints on transactions that directly interact with it, it is not immediately obvious how to build a tractable system when a single transaction may make nested sub-calls to numerous contracts over the course of its execution. Should all of those contracts have the right to impose constraints on the sequencing of that transaction? And if so, how can we ensure that block building remains tractable in light of that? An ideal sequencing system must make restrictions to remain tractable while still satisfying some basic desiderata of allowing a contract to impose constraints on the sequencing of transactions that interact with it. In this paper, we propose a system that we believe lies in a sweet spot, and in some sense is maximally expressive under a certain definition of sequencing being tractable. We call this system the Monotone Priority System.

\subsection{Monotone Priorities: Simple, but Expressive Sequencing Constraints} \label{sec:intro-mps}
The Monotone Priority System operates within a setting in which there are smart contracts that each have some constituent function calls, or executable blocks of code that can modify the state of the blockchain network. Crucially, function calls are composable, in that they can call and execute other function calls over the course of their execution. As hinted at in the previous subsection, composability is what makes it challenging to balance expressivity and tractability: because a single call can touch many contracts through nested sub-calls, it is tempting to design the system to be expressive enough to allow all of those contracts to impose some constraints on the sequencing of a transaction that makes that call. However, building a block that satisfies all of those constraints concurrently must remain simple. 

In the Monotone Priority System, we allow each contract to independently select an integer global \textit{priority} for each of its constituent calls, where a transaction that makes a higher priority root call should be sequenced before a transaction that makes a lower priority root call. The contract developer can select any priority for each call, subject to just one simple constraint:
\[\text{a call cannot have higher priority than any of the calls it references}\]
A block is considered to be valid if it orders transactions in order from high priority to low priority, breaking ties arbitrarily. To prevent contract developers from assigning arbitrarily high priorities to a function call, it may make sense in practice to implement a maximum priority $\lambda_{\texttt{max}}$ that no contract developer can choose to exceed.

To check that a call's priority is valid, we only need to check that its priority is not higher than any of the calls it directly makes. This recursively guarantees that a call will not have have higher priority than any nested sub-calls that it makes, no matter how deeply nested those sub-calls are. A contract developer is allowed to set any priority for a call so long as it is valid.

\begin{figure}[h!]
\centering
\begin{mdframed}[innerleftmargin=30pt, innerrightmargin=30pt]
\begin{center}
    \textbf{The Monotone Priority System: Valid Priorities}
\end{center}
    
    \vspace{15pt}
    A call with a priority $\lambda$ is $\texttt{VALID}$ if:
    \begin{enumerate}
    \item the call only references $\texttt{VALID}$ calls
    \item no call it references has priority greater than $\lambda$
    \end{enumerate}
\end{mdframed}
 
\end{figure}
At a high level, the Monotone Priority System imposes this constraint in order to prevent users from circumventing a contract's priorities through composability. More concretely, suppose that a contract $x$ wants to prioritize one of its calls, $a$, over another one of its calls, $b$, and it does so by making $a$ have a higher priority than $b$. If we don't require that calls have $\texttt{VALID}$ priorities, then a user who wants to make a call to $b$ without their transaction being sequenced after calls to $a$ can do so by making a call to $b'$ in a new contract $x'$. The call $b'$ just executes the behavior of $b$ by calling it directly. If $x'$ can set any priority for $b'$, then it can be prioritized above the call $a$ from contract $x$. The $\texttt{VALID}$ constraint prevents this type of behavior. 

\subsubsection{Expressivity}\label{sec:intro-mps-exp}
How expressive is this system? Even though contracts can't arbitrarily choose the priority of each call (since it must be $\texttt{VALID}$), they can still freely relatively order the priorities of their constituent calls. To make this more concrete, suppose as before that there is a contract\footnote{An example of such a contract is the decentralized order book from the previous footnote, where $a$ is the function call for placing a cancellation, and $b$ is the function call for placing a fill order.} $x$ with two function calls $a$ and $b$, and the contract would like $a$ to have higher priority than $b$. If neither $a$ nor $b$ make any calls to other contracts, it is straightforward to see that $x$ can prioritize $a$ over $b$. This is because it is $\texttt{VALID}$ for $x$ to choose any priorities it would like for the two calls (e.g. $0$ for $a$ and $-1$ for $b$). However, this remains true even if $a$ and $b$ make arbitrary calls to other contracts. Suppose for example that $b$ still does not call any other contracts, but $a$ calls a single helper function\footnote{In the running order book example, a cancellation call might call an escrow function in another contract to release tokens that were locked in an outstanding buy order.} $c$ in another contract $x'$, and the contract $x'$ set the priority of $c$ to be $-100$. While $x$ must set the priority of $a$ to be no higher than $-100$ in order for it to be $\texttt{VALID}$, $x$ can still make $b$ have a lower priority than $a$, for example, by setting it to $-101$. More generally, a contract can always set the relative ordering of its calls' priorities however it would like while keeping them $\texttt{VALID}$, because the Monotone Priority System only prevents contracts from setting priorities that are too \textit{high}. A contract can always arbitrarily deprioritize one call relative to another in order to achieve a desired relative ordering.

Once a contract has chosen a relative ordering over its calls, the Monotone Priority System allows that contract to make the following two guarantees, regardless of the choices made by other contracts. Specifically, for any call $a$ that the contract would like to prioritize over a call $b$, the contract can guarantee that:
\begin{enumerate}
    \item \textbf{Transactions making root calls to $a$ are sequenced before transactions making root calls to $b$ in any valid block.} This is because the contract can always set $\texttt{VALID}$ priorities for $a$ and $b$ so that $b$ has a lower priority than $a$, as we discussed above. 
    
    \item \textbf{Transactions making root calls to $a$ are sequenced before any transactions that potentially call $b$ at some point in their full execution trace.} Any call $c$ that potentially references $b$ during its execution must have a priority that is no higher than $b$ in order to be $\texttt{VALID}$. Since $a$ has a higher priority than $b$, it also will have a higher priority than $c$.
\end{enumerate}

The expressivity of the system can be thought of in terms of these two guarantees. The second guarantee captures how the Monotone Priority System handles composability: every contract that a complex call interacts with can indeed impose some sequencing constraints on that call in a restricted way. Those restrictions keep the system simple, as we discuss next. 

\subsubsection{Simplicity}\label{sec:intro-mps-sim}
The Monotone Priority System converts the sequencing constraints of each of the contracts into a single block validity constraint that remains tractable to satisfy. From the perspective a block builder who has selected a given collection of transactions to include, a valid block can be found by just sorting the transactions in descending order of priority, breaking ties in any way desired. The simplicity of the aggregate ordering constraint makes composition with existing block validity constraints, such as constraints on the cumulative ``size'' of (i.e. maximum network resources used by) the transactions within the block, straightforward.  We exhibit a simple algorithm that finds a valid sequence of transactions under the Monotone Priority System\footnote{Enumerating over all tie-breaking rules $\tau$ allows one to find all valid blocks that include all of the transactions.}.

\begin{figure}[ht!]
\centering
\begin{mdframed}[innerleftmargin=30pt, innerrightmargin=30pt]
\begin{center}
    \textbf{The Monotone Priority System: Valid Blocks}
\end{center}
\vspace{15pt}
\textbf{Input:} A collection of transactions $t_1, \dots, t_n$ that each make a root call to a contract with $\texttt{VALID}$ priorities.

\vspace{5pt}
\textbf{Output:} A transaction ordering that is valid under the Monotone Priority System.

\vspace{7pt}
\hrule

\vspace{7pt}
Initialize $\tau: [n] \longleftrightarrow [n]$ to be a tie-breaking rule. Then,

\begin{enumerate}
    \item Sort each $t_i$ first in descending order by the priority $\lambda_i$ of its root call.
    \item Secondarily sort transactions $t_i$ with the same priority by $\tau(i)$
\end{enumerate}

\textbf{Return:} The resulting ordered sequence of transactions 
\end{mdframed}
\end{figure}

The Monotone Priority System also makes it simple for contract developers to reason about how modifications to their code affect the sequencing of transactions that call the developer's contract. Sequencing constraints depend only on the name of a call, and not on the current network state. Furthermore, it is easy to reason about how priorities change under composition. The contract developer faces no additional sequencing constraints on its calls if it chooses to include a new reference to a high priority call. However, if the contract developer includes a reference to a low priority call, then the developer's call must also have low priority. Put differently, the maximum possible priority of a call can only get lower as that call includes more references to other calls. 

\subsection{Alternate Designs}

Is it possible to make a system that is more expressive without making block building intractable? The main technical result that we prove in a formal sense in this paper is no. More specifically, we define five natural axioms, and show that the Monotone Priority System is the unique system that satisfies all of the axioms. We next give informal descriptions of all five axioms:

\paragraph{Existence.} Our first axiom, existence, requires that no matter what sequencing constraints each contract places on transactions, there must exist a block that orders all of the transactions in a way that satisfies all of the constraints. Without existence, contracts may be able to select conflicting sequencing constraints (e.g. if one contract requires that a transaction $t$ comes before $t'$, and another requires that a transaction $t'$ comes before $t$), resulting in no valid block that both includes all of the transactions and satisfies all of the sequencing constraints. 

\paragraph{Priority.} Our second axiom, priority, requires that the system must allow each contract to relatively order any transactions making root calls to any of the contract's constituent calls, regardless of the constraints that have been imposed by other contracts. This gives the system a baseline level of expressivity.

\paragraph{Extension.} Our third axiom, extension, requires that if the sequencing system allows a contract to sequence a call $c$ before $c'$, then it must also extend that constraint to sequence $c$ before any other call that references $c'$. Without extension, it would be very easy to get around the system's sequencing constraints: a user can just make a new dummy contract that calls $c'$ to get the same functionality as $c'$, but without the constraint that the call must be sequenced after $c$.

\paragraph{Reducibility.} Our fourth axiom, reducibility, states that a contract should not be allowed to sequence a transaction $t'$ to be after another transaction $t$ unless $t'$ references one of the contract's calls in its full trace that the contract also sequences after $t$. Without reducibility, a contract developer may have a difficult time understanding what rights other contracts can impose over the sequencing of a transaction that calls the developer's contract. 

\paragraph{Independence of Irrelevant Calls.} Our fifth and final axiom, independence of irrelevant calls, states that whether the system grants a contract the option to sequence one transaction over another (perhaps, contingent on other contract's choices) must depend only on the calls that are referenced in the full traces of those two transactions. Without it, a contract developer would have to reason about irrelevant calls in other contracts to determine whether they can impose a sequencing constraint between two particular transactions. \\

The remainder of the paper is organized as follows. In Section \ref{sec:model}, we formally define the full design space of contract-specific sequencing rights systems, as well as the five axioms we've listed above. This design space is rich, where the protocol can allow contracts to make any collection of sequencing constraints in a way that can be contingent on the choices made by other contracts. In Section \ref{sec:mps}, we formally describe the expressivity guarantees from Section \ref{sec:intro-mps-exp} within this language. We then show that the system implementing those guarantees satisfies all five axioms in Section \ref{sec:mps-sat} (Theorem \ref{thm:sat}), and is further unique in doing so in Section \ref{sec:mps-uni} (Theorem \ref{thm:uni}). In Section \ref{sec:ind}, we show that all five axioms are in fact necessary for uniqueness: there exist other systems that satisfy any collection of four, but not the fifth (Theorem \ref{thm:ind}). Next, in Section \ref{sec:order}, we discuss when a system of sequencing constraints can be implemented by assigning global integer priorities to each function call. We show that this is not guaranteed by the Existence axiom (Theorem \ref{thm:order}), but it is implied by all five axioms together (Theorem \ref{thm:mpsexist}). We then formally show that the Monotone Priority System implements the expressivity guarantees from Section \ref{sec:intro-mps-exp} in Section \ref{sec:order-mps} (Theorem \ref{thm:mpsimp}). Finally, in Section \ref{sec:dis}, we give a brief discussion and open directions. 

\section{Model} \label{sec:model}

\subsection{Calls, Contracts, and References}
We consider a setting in which there is a set of \textit{calls} $\C$ and a set of \textit{contracts} $\mb X$. Associated with any call $c \in \C$ is a set of calls $r(c) \subseteq \C$ that the call directly references (i.e. makes a root call to). Given the reference map, we can define the \textit{call trace} of a call $\tr: \C \to 2^\C$ as follows:

\begin{equation}
    \tr(c) := \set{c' \in \C \mid \exists (c_0 := c, c_1, c_2, \dots, c_N) \text{ s.t. } c_N = c' \text{ and }c_{i} \in r(c_{i-1})}
\end{equation}

The call trace includes all calls that may be referenced during the complete execution of a call. 
\subsection{Rights Systems and Axioms}\label{sec:model-axioms}

A \textit{rights system} $\R(p)$ is a nonempty family of sets parameterized a map $p: \C \to \mb X$, consisting of vectors of the form $(\succ_x)_{x \in \mb X}$. The map $p$ indicates the parent contract for each call. Each $\succ_x$ is a strict partial order on the set $\C$ of all calls, and can be interpreted as a sequencing constraint chosen by contract $x$. We make an additional constraint on the parent mappings $p$ that parameterize a rights system. We require the existence of a \textit{deployment order} $>_D$: a strict total ordering over the set of contracts $\mb X$ satisfying the property
\begin{equation}
    x >_D x' \implies c \not \in \tr(c') \text{ for all }c, c' \in \C \text{ s.t. }p(c) = x\text{ and }p(c') = x'
\end{equation}
In other words, a contract cannot have a child call that references a child call of a contract that is deployed in the future. Next, we require that for any contract $x$, if there is a contract $x' >_D x$ that is deployed after $x$, then that contract only has a finite number of child calls $\abs{\C_x(p)} < \infty$. We call such a $p$ \textit{deployable}.

We now list five axioms that one may wish such a system to abide by. Our first axiom is \textit{existence}, which asserts that for any finite collection of transactions (each transaction is represented as a call), and any collection of strict partial orders that were selected from the rights system, there exists a block ordering $B$ that sequences all of the transactions without violating the sequencing constraints that were chosen by any contract. 

\begin{axiom}[Existence]
For all deployable $p: \C \to \mb X$, $(\succ_x)_{x \in \mb X} \in \R(p)$ and all finite sequences $(t_1, \dots, t_N)$ where each $t_i \in \C$, there exists as bijection $B: [N] \longleftrightarrow [N]$ such that for all $i, j$ where $i < j$, there does not exist an $x \in \mb X$ such that $t_{B(j)} \succ_x t_{B(i)}$.
\end{axiom}

Our second axiom is \textit{priority}, which states that each contract has the power to sequence transactions that directly call its child calls, regardless of the sequencing constraints chosen by any of the other contracts. In order to formally define it, we make a few intermediate definitions. Given a pointer map $p: \C \to \mb X$, we define for each $x \in \mb X$ as shorthand
\[\C_x(p) := \set{c \in \C \mid p(c) = x}\]
That is, $\C_x(p)$ are the calls that are directly included in the contract $x$ under a pointer map $p$. We call a strict partial order $\succ^*_x$ over $\C_x(p)$ \textit{admissible} if there does not exist a $c, c' \in \C_x(p)$ such that $c \succ^*_x c'$ and $c \in \tr(c)$. In other words, an admissible partial order over a contract $x$'s child calls under $p$ does not allow $x$ to prioritize a call $c$ over $c'$ if $c$ calls $c'$. The priority axiom allows each contract to sequence its child calls according to any admissible partial order, regardless of the constraints chosen by other contracts:

\begin{axiom}[Priority]
For all deployable $p: \C \to \mb X$, $(\succ_x)_{x \in \mb X} \in \R(p)$, all contracts $x^* \in \mb X$, and all admissible strict partial orders $\succ^*_{x^*}$ over the set $\C_x(p)$ of child calls of contract $x^*$, there exists a $(\succ'_x)_{x \in \mb X} \in \R(p)$ such that for all $x \ne x^*$, $\succ'_x = \succ_x$, and $c_1 \succ'_{x^*} c_2 \iff c_1 \succ^*_{x^*} c_2$ for all $c_1, c_2 \in \C_x(p)$.
\end{axiom}

Our third axiom is \textit{extension}, which states that if the contract $x^*$ chooses to sequence a call $c$ to have higher priority than a call $c'$ under the rights system, then the sequencing constraint must extend to any other call $c''$ which calls $c'$ at some point in its trace. Without the extension axiom, a transaction may avoid the sequencing constraint by using a dummy call in a different contract.

\begin{axiom}[Extension]
For all deployable $p: \C \to \mb X$, $(\succ_x)_{x \in \mb X} \in \R(p)$, all contracts $x^* \in \mb X$, and all calls $c, c' \in \C$, $c \succ_{x^*} c' \implies c \succ_{x^*} c''$, for any $c''$ such that $c' \in \tr(c'')$.
\end{axiom}

Our fourth axiom is \textit{reducibility}, which states that if a contract $x$ would like to sequence a call $c$ before $c'$, then there must be some sub-call $c''$ that is made during the execution of $c'$ that $x$ must also have the right to sequence before $c$. Reducibility is a property that may be crucial to developers building such a $c'$: they may guarantee with it that $x$ does not have the right to sequence a call $c$ before it by ensuring that $c'$ does not reference any of $x$'s child calls that are sequenced after $c$. 

\begin{axiom}[Reducibility]
    For all deployable $p: \C \to \mb X$, $(\succ_x)_{x \in \mb X} \in \R(p)$, all contracts $x^* \in \mb X$, and all calls $c, c' \in \C$, if $c \succ_{x^*} c'$, then either $p(c') = x^*$ or there exists a $c'' \in \tr(c')$ such that $p(c'') = x^*$ and $c \succ_{x^*} c''$.
\end{axiom}

Our fifth and final axiom is \textit{independence of irrelevant calls}. It states that if a contract $x$ can sequence a call $t$ prior to $t'$, then (perhaps contingent on the choices of other contracts) it must be theoretically possible for $x$ to sequence $t$ prior to $t'$ if the pointer map changes only for calls that are irrelevant to $t$ and $t'$.

\begin{axiom}[Independence of Irrelevant Calls]
Let $p: \C \to \mb X$ be deployable, $(\succ_x)_{x \in \mb X} \in \R(p)$, $x^* \in \mb X$, and $t, t' \in \C$ be such that $t \succ_{x^*} t'$. Then, for any deployable $p': \C \to \mb X$ such that $p'(t) = p(t)$, $p'(t') = p(t')$, and $p(c) = p'(c)$ for all $c \in \tr(t) \cup \tr(t')$, there exists a $(\succ'_x)_{x \in \mb X} \in \R(p')$ such that $t \succ'_{x^*} t'$.
\end{axiom}

\section{A Unique Rights System} \label{sec:mps}

We define a rights system $\mb R^*(p)$ to be in bijection with the set of all $(\succ^*_x)_{x \in \mb X}$ such that each $\succ^*_x$ is an admissible strict partial order on $\C_x(p)$. For any such $(\succ^*_x)_{x \in \mb X}$, we let there be a $(\succ_x)_{x \in \mb X} \in \R^*(p)$ where $t \succ_x t'$ if and only if
\begin{enumerate} 
\item $p(t) = p(t') = x$ and $t \succ^*_x t'$, or
\item $p(t) = x$ and $p(t') \ne x$, and there exists a $c \in \tr(t')$ with $p(c) = x$ such that $t \succ^*_x c$
\end{enumerate}

Notice that these are the same two guarantees as we informally stated in Section \ref{sec:intro-mps-exp}. We show in this section that $\R^*(p)$ is unique among rights systems that satisfy the axioms in Section \ref{sec:model-axioms}. 

\subsection{Satisfaction} \label{sec:mps-sat}

Next, we show that $\R^*(p)$ as described satisfies all five axioms. Our proof that $\R^*(p)$ satisfies Existence (Lemma \ref{lem:exist}) works by utilizing the correspondence to the Monotone Priority System that we establish in the previous subsection. 

\begin{theorem}\label{thm:sat}
    For any $p: \C \to \mb X$ with a deployment order $>_D$, $\R^*(p)$ satisfies Existence, Priority, Extension, Reducibility, and Independence of Irrelevant Calls.
\end{theorem}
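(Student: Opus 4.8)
The plan is to dispatch the four ``structural'' axioms (Priority, Extension, Reducibility, and Independence of Irrelevant Calls) by unwinding the two defining conditions of $\R^*(p)$ directly, and to reserve the real work for Existence, which I would establish through the correspondence to global integer priorities described in Section~\ref{sec:intro-mps}. Throughout I would use two elementary facts: that $\tr$ is transitive (if $c' \in \tr(c'')$ then $\tr(c') \subseteq \tr(c'')$, by concatenating reference paths), and that $\tr$ does not depend on the parent map $p$, being a function of the reference map alone.

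\textbf{Existence.} First I would build a priority function $\lambda \colon \C \to \Z$ (or into any totally ordered set, since only finitely many transactions are ever sorted) with two properties: (i) $\lambda(c) \le \lambda(c')$ whenever $c' \in \tr(c)$, so that the $\texttt{VALID}$ constraint holds, and (ii) $\lambda(c) > \lambda(c')$ whenever $c \succ^*_x c'$ for some contract $x$. I would construct $\lambda$ by induction along the deployment order $>_D$, assigning priorities to the child calls of each contract only after the contracts its calls may reference — which by the deployment-order condition are deployed earlier — have been handled; admissibility of each $\succ^*_x$ guarantees that the relative order a contract imposes on its own calls never conflicts with the reference constraints it inherits, and the finiteness clause of deployability keeps the induction well-founded. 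Given $\lambda$, conditions (i) and (ii) yield the key implication $t \succ_x t' \Rightarrow \lambda(t) > \lambda(t')$: the witness $c \in \tr(t')$ with $p(c) = x$ and $t \succ^*_x c$ supplied by the definition (with $c = t'$ in the same-contract case) satisfies $\lambda(t) > \lambda(c) \ge \lambda(t')$. Existence is then immediate: sort $t_1, \dots, t_N$ in descending $\lambda$-order, breaking ties by any bijection; for $i < j$ we get $\lambda(t_{B(i)}) \ge \lambda(t_{B(j)})$, so $t_{B(j)} \succ_x t_{B(i)}$ can hold for no $x$. This is exactly the block-building algorithm of Section~\ref{sec:intro-mps-sim}.

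\textbf{Priority, Extension, Reducibility.} For Priority I would take the tuple obtained from $(\succ^*_x)_x$ by replacing $\succ^*_{x^*}$ with the prescribed admissible order and leaving every other coordinate fixed; since each $\succ_x$ is a function of $\succ^*_x$ alone, the relations for $x \ne x^*$ are untouched, and condition~1 of the definition shows that the new relation restricted to $\C_{x^*}(p)$ is precisely the prescribed order. For Extension, from $c \succ_{x^*} c'$ I would extract the witnessing child call $d$ of $x^*$ lying in $\tr(c')$ with $c \succ^*_{x^*} d$; because $c' \in \tr(c'')$ gives $d \in \tr(c'')$ by transitivity, the same $d$ witnesses $c \succ_{x^*} c''$. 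For Reducibility I would read off the definition: condition~1 forces $p(c') = x^*$, while condition~2 supplies a $c'' \in \tr(c')$ with $p(c'') = x^*$ and $c \succ^*_{x^*} c''$, and this last comparison is exactly $c \succ_{x^*} c''$ by condition~1.

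\textbf{Independence of Irrelevant Calls, and the main obstacle.} For Independence I would exploit that $\tr(t)$ and $\tr(t')$ are unchanged under the reparenting $p \to p'$ and that $p'$ agrees with $p$ on $t$, on $t'$, and on every call in $\tr(t) \cup \tr(t')$. Hence the witness $d \in \tr(t')$ of $t \succ_{x^*} t'$ still satisfies $p'(d) = x^* = p'(t)$, so both $t$ and $d$ remain child calls of $x^*$ under $p'$; defining new admissible relative orders over the child sets determined by $p'$ that reproduce the comparison $t \succ^*_{x^*} d$ (transplanting the relevant part of $\succ^*_{x^*}$) gives an element of $\R^*(p')$ — admissibility is inherited because the forbidden configurations depend only on the unchanged traces — and this element satisfies $t \succ'_{x^*} t'$. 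I expect the single genuine difficulty to be the Existence argument, and within it the global, simultaneous construction of $\lambda$: one must rule out any cycle of mixed reference and priority constraints spanning several contracts, showing that admissibility eliminates the within-contract conflicts while the deployment order eliminates the cross-contract ones, and that the deployability hypotheses make the contract-by-contract assignment terminate.
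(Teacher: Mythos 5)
Your proposal is correct and follows essentially the same route as the paper: the four structural axioms are read off from the two defining clauses of $\R^*(p)$ (using transitivity of $\tr$ for Extension and the stability of the witness call under reparenting for Independence of Irrelevant Calls), and Existence is obtained by constructing a monotone integer priority $\lambda$ inductively along the deployment order and sorting by it, which is exactly the paper's deferral to Theorems \ref{thm:mpsexist} and \ref{thm:order}.
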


\begin{proof}
    \begin{lemma}\label{lem:exist}
        $\R^*(p)$ satisfies Existence.
    \end{lemma}
    \begin{proof}
        This follows directly from Theorems \ref{thm:mpsexist} and \ref{thm:order}, which are discussed in Section \ref{sec:order}. 
    \end{proof}

    \begin{lemma}\label{lem:priority}
        $\R^*(p)$ satisfies Priority.
    \end{lemma}
    \begin{proof}
        This follows trivially from the fact that every $(\succ^*_x)_{x \in \mb X}$ where $\succ_x^*$ is an admissible strict partial order over $\C_x(p)$ has an extension in $\R^*(p)$. 
    \end{proof}

    \begin{lemma}\label{lem:ext}
        $\R^*(p)$ satisfies Extension.
    \end{lemma}
    \begin{proof}
        Let $p: \C \to \mb X$, and $(\succ_x)_{x \in \mb X} \in \R(p)$. Next, let $x^* \in \mb X$ and $t, t' \in \C$ such that $t \succ_x t'$. Let $t'' \in \C$ be a call such that $t' \in \tr(t'')$. It then follows that there exists a $c \in \tr(t'')$ such that $p(c) = x^*$, whence we must have that $t \succ_x t''$ as desired.
    \end{proof}

    \begin{lemma}\label{lem:red}
    $\R^*(p)$ satisfies Reducibility.
    \end{lemma}
    \begin{proof}
    Reducibility follows trivially from the definition of $\R^*(p)$: For all deployable $p: \C \to \mb X$, $(\succ_x)_{x \in \mb X} \in \R(p)$, all contracts $x^* \in \mb X$, and all calls $t, t' \in \C$, if $t \succ_{x^*} t'$, then either $p(t') = x^*$ or there exists a $c \in \tr(t')$ such that $p(c) = x^*$ and $t \succ_{x^*} c$.
    \end{proof}

    \begin{lemma}\label{lem:iic}
    $\R^*(p)$ satisfies Independence of Irrelevant Calls.
    \end{lemma}
    \begin{proof}
        Let $p: \C \to \mb X$, $(\succ_x)_{x \in \mb X} \in \R(p)$, $x^* \in \mb X$, and $t, t' \in \C$ be such that $t \succ_{x^*} t'$. By the definition of $\R^*(p)$, there are then two exhaustive cases.
        \begin{case}
            $p(t) = p(t') = x^*$
        \end{case}
        In this case, take any $p': \C \to \mb X$ such that $p'(t) = p(t)$, $p'(t') = p(t')$, and $p(c) = p'(c)$ for all $c \in \tr(t) \cup \tr(t')$. As $p'(t) = p'(t') = x$, and $\succ_{x^*}$ was admissible over $\C_{x^*}(p)$, there exists a $(\succ'_x)_{x \in \mb X} \in \R(p')$ such that $t \succ'_{x^*} t'$.
        \begin{case}
            $p(t) = x^*$, $p(t') \ne x^*$, and there exists a $c \in \tr(t')$ with $p(c) = x^*$ such that $t \succ_{x^*} c$.
        \end{case}
        In this case, take any $p': \C \to \mb X$ such that $p'(t) = p(t)$, $p'(t') = p(t')$, and $p(c) = p'(c)$ for all $c \in \tr(t) \cup \tr(t')$. For such a $p'$, we must still have that $p'(t) = p'(c) = x^*$. As $\succ_{x^*}$ was admissible over $\C_{x^*}(p)$, there exists a $(\succ'_x)_{x \in \mb X} \in \R(p')$ such that $t \succ'_{x^*} c$. It follows that $t \succ'_{x^*} t'$, as $c \in \tr(t')$ remains true.  
    \end{proof}
\end{proof}

\subsection{Uniqueness}\label{sec:mps-uni}

Next, we show that $\R(p)$ is the unique rights system that satisfies all five axioms. In order to prove this result, we need to make an assumption of richness on the space of calls, namely that one can always make a new call that is distinct from a collection of existing calls that references yet another existing call. 

\begin{assumption}[Richness]\label{ass:rich}
For any finite collection of calls $N \subseteq \C$ and a call $c \in \C$, there exists a call $c' \in \C \setminus N$ such that $c \in r(c')$.
\end{assumption}

Turing-complete execution environments, in particular, satisfy this basic richness assumption. 

\begin{theorem}[Uniqueness]\label{thm:uni}
    Let $\R(p)$ be any rights system satisfying Existence, Priority, Extension, Reducibility, and Independence of Irrelevant Calls. Then, for any $p: \C \to \mb X$, $\R(p) = \R^*(p)$.
\end{theorem}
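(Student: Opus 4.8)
The plan is to establish the two set inclusions $\R(p) \subseteq \R^*(p)$ and $\R^*(p) \subseteq \R(p)$: the first says an axiom-satisfying system may grant no \emph{more} rights than $\R^*(p)$, the second that it must grant \emph{at least} those rights. Throughout I fix a deployable $p$ and write $\succ := \bigcup_{z \in \mb X} \succ_z$ for the union relation of a tuple, recalling that Existence is equivalent to $\succ$ being acyclic on every finite set of calls (a finite digraph is topologically sortable iff it is acyclic). For the forward inclusion, fix $(\succ_x)_{x} \in \R(p)$ and let $\succ^*_x$ be the restriction of $\succ_x$ to $\C_x(p)$. Each $\succ^*_x$ is a strict partial order by inheritance and is \emph{admissible}: if $c \succ_x c'$ held with $c, c' \in \C_x(p)$ and $c' \in \tr(c)$, then Extension (with $c'' = c$, legitimate since $c' \in \tr(c)$) would force $c \succ_x c$, contradicting irreflexivity. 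It then remains to show $(\succ_x)$ is exactly the element of $\R^*(p)$ generated by $(\succ^*_x)$. The ``if'' direction is immediate: rule (1) is the definition of $\succ^*_x$, and for rule (2) — where $c \in \tr(t')$, $p(c) = x$, and $t \succ^*_x c$ — Extension applied to $t \succ_x c$ yields $t \succ_x t'$.

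The ``only if'' direction is the heart of the proof. Assume $t \succ_x t'$. Reducibility immediately supplies a witness $s \in \tr(t')$ (or $s = t'$) with $p(s) = x$ and $t \succ_x s$, which is precisely the $c$ needed for rule (2) — \emph{provided} we also know $p(t) = x$. Establishing the key lemma $t \succ_x t' \Rightarrow p(t) = x$ is the main obstacle, and I would argue it by contradiction. Suppose $p(t) = y \ne x$ and reduce to $t \succ_x s$ with $p(s) = x$. The idea is to manufacture a two-cycle in $\succ$ and invoke Existence: by Richness introduce a fresh call $u$ with $s \in r(u)$, so Extension gives $t \succ_x u$ (as $s \in \tr(u)$); and if $u$ can be made a child call of $y$, then Priority lets $y$ impose the admissible order $u \succ_y t$, producing $t \succ_x u$ and $u \succ_y t$ simultaneously — a cycle on $\{t, u\}$ violating Existence. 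The difficulty is legality: placing $u \in \C_y$ referencing $s$ requires a deployment order in which $y$ is deployed after the contracts of $\tr(s)$, which need not hold under $p$. This is exactly where Independence of Irrelevant Calls enters: applied to $t \succ_x s$, it transports that relation into $\R(p')$ for any deployable $p'$ agreeing with $p$ on $\{t,s\} \cup \tr(t) \cup \tr(s)$, and I would use it to pass to a $p'$ whose deployment order places $y$ late enough (re-pointing only irrelevant calls, leaning on the deployability finiteness hypothesis to guarantee such a $p'$ exists). Running the cycle construction inside $\R(p')$ then contradicts Existence there, so $p(t) = x$ after all.

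With $p(t) = x$ in hand, rule (1) covers the case $p(t') = x$ and rule (2) covers $p(t') \ne x$, completing $\R(p) \subseteq \R^*(p)$. I expect the genuinely delicate points to be (i) the deployment-order bookkeeping needed to legally place $u$ via Independence of Irrelevant Calls, and (ii) the self-referential edge case $t \in \tr(s)$, in which $u \succ_y t$ is inadmissible; the latter I would dispatch by choosing the Reducibility witness $s$ of minimal trace depth so that $t \notin \tr(s)$, or by showing that $t \in \tr(s)$ forces a deployment constraint that itself yields the contradiction. It is worth noting that without this propagation across different $p'$, the relation $t \succ_x s$ with $p(t) \ne x$ and a deployment order isolating $t$ appears locally consistent with the other four axioms, so the interplay of Independence of Irrelevant Calls with Existence is essential.

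For the reverse inclusion $\R^*(p) \subseteq \R(p)$, fix a target $(\succ_x) \in \R^*(p)$ with generating admissible orders $(\succ^*_x)$. By the characterization just proved, it suffices to produce \emph{some} element of $\R(p)$ whose restriction to each $\C_x(p)$ equals $\succ^*_x$, since any such element is then forced to coincide with the target. Starting from any element of the nonempty $\R(p)$, Priority lets me overwrite a single contract's restriction with its target admissible order while fixing all others; finitely many contracts are handled by one application per contract. The remaining subtlety is the possibly infinite contract set, which I would control by an induction along the deployment order $>_D$, using the hypothesis that every non-maximal contract has finitely many child calls to keep the construction well-defined. Combining the two inclusions yields $\R(p) = \R^*(p)$.
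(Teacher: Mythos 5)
Your proposal follows essentially the same route as the paper's proof: establishing admissibility of the restrictions via Extension, handling the $p(t)=x$ cases with Reducibility and Extension, ruling out $t \succ_x t'$ when $p(t) \ne x$ via the Richness--Independence of Irrelevant Calls--Priority--Existence two-cycle construction, and obtaining the reverse inclusion by repeated applications of Priority to an arbitrary element of the nonempty $\R(p)$. The only cosmetic difference is that you attach the fresh call to the Reducibility witness $s$ rather than directly to $t'$ as the paper does, and you are in fact somewhat more explicit than the paper about the deployability and admissibility edge cases in that step.
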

\begin{proof}
    Suppose first for contradiction that for some $p: \C \to \mb X$, there is a rights system $\R(p)$ satisfying all five axioms such that $\R(p) \not \subseteq \R^*(p)$. Then, there exists a $(\succ_x)_{x \in \mb X} \in \R(p)$ such that $(\succ_x)_{x \in \mb X} \not \in \R^*(p)$. 

    \begin{lemma}\label{lem:admissible}
        For each $x \in \mb X$, $\succ_x$ is an admissible strict partial order when restricted to $\C_x(p)$.
    \end{lemma}

    \begin{proof}
        Suppose for contradiction that this is false. Then, there exists an $x \in \mb X$ and $c, c' \in \C_x(p)$ such that $c \succ_x c'$ and $c' \in \tr(c)$. As $\R(p)$ satisfies Extension, it follows that $c \succ_x c$, which is a contradiction.
    \end{proof}

    It follows by Lemma \ref{lem:admissible} that there exists a $(\succ^*_x)_{x \in \mb X}$ such that each $\succ_x^*$ is an admissible strict partial order on $\C_x(p)$, and for any $x \in \mb X$ and $c, c' \in \C_x(p)$, $c \succ_x^* c' \iff c \succ_x c'$. We now let $(\succ'_x)_{x \in \mb X} \in \R^*(p)$ be the unique vector of strict partial orders that also agrees with $(\succ^*_x)_{x \in \mb X}$ on each $\C_x(p)$. To achieve contradiction, we show that for any $x \in \mb X$ and $t, t' \in \C$, $t \succ_x t' \iff t \succ'_x t'$.

    There are three relevant cases to consider. 

    \begin{case}
        $p(t) = p(t') = x$
    \end{case}

    It trivially holds that $t \succ_x t' \iff t \succ'_x t'$, as both $\succ_x$ and $\succ'_x$ agree with $\succ^*_x$ for any such $t, t' \in \C_x(p)$

    \begin{case}
        $p(t) = x$ and $p(t') \ne x$. 
    \end{case}

    We first show that $t \succ_x t' \implies t \succ'_x t'$. Suppose for contradiction that this is not true. Then, $t \succ_x t'$ and $t \not \succ'_x t'$. It follows that there does not exist a $c \in \tr(t)$ with $p(c) = x$ such that $t \succ^*_x c$. However, this would imply that $\R(p)$ does not satisfy reducibility, leading to a contradiction. 

    Next, we show that $t \succ'_x t \implies t \succ_x t$. Suppose for contradiction that this is not true. Then, $t \succ'_x t'$ and $t \not \succ_x t'$. It follows that there exists a $c \in \tr(t')$ with $p(c) = x$ such that $t \succ^*_x c$. It further follows by the definition of $(\succ^*_x)_{x \in \mb X}$ that $t \succ_x c$. However, now we find that $\R(p)$ does not satisfy Extension, leading to a contradiction.

    \begin{case}
        $p(t) \ne x$
    \end{case}

    As $t \not \succ'_x t'$ for all $t, t'$ with $p(t) \ne x$, it suffices to show that $t \not \succ_x t'$. To see that this is true, suppose for contradiction that $t \succ_x t'$. By Assumption \ref{ass:rich}, there exists a call $c' \in \C$ such that $t' \in r(c')$ and $c' \not \in \tr(t) \cup \tr(t') \cup \set{t,t'}$. Next, since $\R(p)$ satisfies Independence of Irrelevant Calls, there exists a $(\comp \succ_x)_{x \in \mb X} \in \R(p')$ such that $t \comp \succ_x t'$, where
    \[p'(c) := \begin{cases} p(t) & c = c'\\ p(c) & \text{else}\end{cases}\]
    Let $x' := p(t)$. By Priority, there then exists a $(\comp \succ_x')_{x \in \mb X} \in \R(p')$ such that $t \comp \succ'_x t'$ and $t' \comp \succ'_{x'} t$. Next, consider the sequence of transactions $(t_1, t_2) := (t, t')$. For any $B: [2] \longleftrightarrow [2]$, it is therefore true that $t_{B(2)} \comp \succ'_x t_{B(1)}$ or $t_{B(2)} \comp \succ'_{x'} t_{B(1)}$. However, $\R(p)$ must now therefore not satisfy Existence, yielding a contradiction as desired. 

    With the conclusion of this case, we have shown that $\R(p) \subseteq \R^*(p)$. Next, we show that $\R^*(p) \subseteq \R(p)$. To see this, suppose for contradiction that there exists a $(\succ_x)_{x \in \mb X} \in \R^*(p)$ that is not in $\R(p)$. It follows that for some $(\succ^*_x)_{x \in \mb X}$ where each $\succ^*_x$ is an admissible strict partial order over $\C_x(p)$, there does not exist $(\succ'_x)_{x \in \mb X} \in \R(p)$ that agrees with $(\succ^*_x)_{x \in \mb X}$ on each $\C_x(p)$. However, as $\R(p)$ must be non-empty, we may repeatedly apply Priority to find such a $(\succ'_x)_{x \in \mb X} \in \R(p)$, yielding a contradiction. 
\end{proof}

\section{Independence}\label{sec:ind}

In this section, we show that our five axioms are independent of each other.

\begin{theorem}\label{thm:ind}
Existence, Priority, Extension, Reducibility, and Independence of Irrelevant Calls are independent axioms.
\end{theorem}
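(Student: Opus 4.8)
The plan is to prove independence in the standard way: for each of the five axioms, exhibit a rights system that satisfies the other four but violates the one in question. The uniqueness proof (Theorem \ref{thm:uni}) is the guide here, since its case analysis pinpoints exactly which axiom forces which part of the structure of $\R^*(p)$. A system satisfying Priority, Extension, and Reducibility already agrees with $\R^*(p)$ on every pair with $p(t) = x$ (Cases 1--2 of that proof), so the only freedom left is to relax admissibility/reducibility or to add relations of the Case-3 type ($p(t) \neq x$); each counterexample will therefore be a controlled perturbation of $\R^*(p)$.

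Two of the systems are immediate. For \textbf{Priority}, I take the trivial system $\R^{(P)}(p) := \{(\varnothing)_{x \in \mb X}\}$ whose only element assigns the empty relation to every contract: Existence holds (no constraints to satisfy), while Extension, Reducibility, and Independence of Irrelevant Calls hold vacuously because no relation $t \succ_x t'$ ever occurs, and Priority fails as soon as some contract has two incomparable children. For \textbf{Extension}, I take the ``children-only'' system obtained from $\R^*(p)$ by keeping clause (1) and discarding clause (2): contracts may order their own children admissibly but this never extends to external callers. Priority is inherited; Reducibility holds because every relation has $p(t') = x$; Existence holds because each tuple's constraints are a sub-relation of the corresponding $\R^*(p)$ tuple, so any block valid for the latter is valid here; Independence holds because a child-ordering depends only on $p(t) = p(t') = x$ and admissibility is determined by $\tr$ alone; and Extension fails on any $c \succ_x c'$ possessing a caller $c''$ of $c'$ outside $\C_x(p)$.

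The remaining three are augmentations of $\R^*(p)$. For \textbf{Existence}, I take the maximal system $\R^{(E)}(p)$ of all tuples of per-contract strict partial orders that are admissible on children and satisfy Extension and Reducibility; it contains $\R^*(p)$, so Priority holds, Extension and Reducibility hold by fiat, and a short transfer argument gives Independence. Using Richness (Assumption \ref{ass:rich}) I then build calls $a,b$ and children $c_1 \in \C_{x_1}(p)$, $c_2 \in \C_{x_2}(p)$ with $c_1 \in \tr(b)$ and $c_2 \in \tr(a)$, so the Reducibility-legal relations $a \succ_{x_1} c_1$ and $b \succ_{x_2} c_2$ extend to $a \succ_{x_1} b$ and $b \succ_{x_2} a$, defeating Existence on $(a,b)$. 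For \textbf{Reducibility}, I augment $\R^*(p)$ by allowing each contract to order its children before a fixed sink call $\omega$ (a leaf that no call references and that is the unique child of its contract): Extension adds nothing, Reducibility fails on $c \succ_x \omega$, and since $\omega$ only ever appears on the right no cycle is created, so Existence (and Priority, Independence) survive. For \textbf{Independence of Irrelevant Calls}, I make a single Case-3 relation $t^* \succ_x c^*$ (with $p(t^*) \neq x$, $c^* \in \C_x(p)$) available only when an irrelevant call $c_0 \notin \tr(t^*) \cup \tr(c^*)$ satisfies $p(c_0) = x_0$; two deployable maps $p, p'$ agreeing on $\tr(t^*) \cup \tr(c^*)$ but disagreeing on $p(c_0)$ then witness the failure, while the added relation is chosen one-directionally so that Existence, Priority, Extension, and Reducibility persist.

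The main obstacle is verifying that the three augmented systems preserve \emph{Existence} without inadvertently breaking the other structural axioms. Priority is free in each case because every construction contains $\R^*(p)$, so the realizing tuples for admissible child-orderings are still present. The delicate checks are: (i) for the Existence-breaker, confirming it nonetheless satisfies Independence, which I do by transferring each witnessing relation to $p'$ using that the trace, and hence the grounding child $c$, is preserved; and (ii) for the Reducibility- and Independence-breakers, confirming that the added relations never close into a cross-contract cycle, so that a global ordering still exists for every choice of constraints — here the sink/one-directional design of the added relations, together with the deployment-order constraint that already forbids reference cycles, is what makes acyclicity go through.
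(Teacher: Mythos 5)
Your proposal is correct and follows essentially the same route as the paper: five witness systems, one per axiom, with your trivial system for Priority and your children-only system for Extension matching the paper's constructions exactly, and your Existence-, Reducibility-, and Independence-breakers being abstract versions of the paper's concrete three-call augmentations of $\R^*(p)$ (an Extension-closure that forces a two-contract cycle, a single added non-reducible cross-contract relation, and a relation whose availability is conditioned on the parent of an irrelevant call). The only caveat is that the paper works with small finite call sets rather than invoking Richness, which is cleaner since your Reducibility witness (a call that nothing references) is incompatible with Assumption \ref{ass:rich}; but as each counterexample may live in its own universe of calls this does not affect correctness.
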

\begin{proof}
    To prove this statement, we need to exhibit a model that satisfies any four of the axioms without satisfying the fifth. As such, this proof is split into five lemmas. 

    \begin{lemma}
        There exists a model that does not satisfy Existence, but satisfies Priority, Extension, Reducibility, and Independence of Irrelevant Calls.
    \end{lemma}

    \begin{proof}
        Let $\C := \set{a, a', b}$, where $r(a) = r(b) = \emptyset$, and $r(a') = \set{a}$. Let $\mb X  = \set{A, B}$. Given any deployable $p: \C \to \mb X$, we define a rights system $\R(p)$ as follows. Contract $A$ can select any admissible strict partial order $\succ_A$ over $\C_A(p)$. Then, for any $c \in \C_A(p)$ and $c' \in \C_B(p)$, we let $c \succ_A c'$ if and only if there exists a $c'' \in \C_A(p)$ such that $c' \in \tr(c'')$. Contract $B$ similarly selects a $\succ_B$. However, $B$ can additionally (optionally) choose to add any constraints of the form $c \succ_B c'$, where $p(c') = B$, and $c \not \in \tr(c')$. For any such constraints that are added, the constraints $c \succ_b c''$ must also be added, for any $c''$ such that $c' \in \tr(c'')$. If the corresponding $\succ_B$ remains a strict partial order, then $B$ may choose it. 

        The rights system $\R^*(p)$ trivially satisfies Priority (as $B$ can choose to take no additional constraints). It further satisfies Extension, as we have explicitly closed all $\succ_A$ and $\succ_B$ under this constraint. Next, notice that $A$'s choices of $\succ_A$ trivially satisfy reducibility. $B$'s choices also satisfy reducibility, as if it adds any additional constraints $c \succ_B c'$, then either $p(c') = B$ or there exists some $c''$ such that $c \succ_B c''$, $p(c'') = B$, and $c'' \in \tr(c')$. Finally, the rights system satisfies Independence of Irrelevant Calls, as whether or not any contract $x \in \mb X$ can choose $c \succ_x c'$ for some $c, c' \in \C$ depends only on the value of $p$ on the set $\set{c, c'} \cup \tr(c) \cup \tr(c')$.

        However, consider the particular parent mapping
        \[p^*(c) := \begin{cases}A & c \in \set{a', b}\\B & c = a \end{cases}\]
        This mapping is deployable with the ordering $A >_D B$. Under the rights system $\R^*(p^*)$, however, it is possible to choose $\succ_B$ such that $b \succ_B a$ and $b \succ_B a'$, while also choosing $\succ_A$ such that $a' \succ_A b$. This contradicts Existence, as for the sequence $(t_1, t_2) := (a', b)$ of transactions, for any $B: [2] \longleftarrow [2]$, there will exist some contract $x \in \mb X$ such that $t_{B(2)} \succ_x t_{B(1)}$.
    \end{proof}

    \begin{lemma}
        There exists a model that does not satisfy Priority, but satisfies Existence, Extension, Reducibility, and Independence of Irrelevant Calls.
    \end{lemma}

    \begin{proof}
        Take $\C := \set{a,b,c,d}$ where $r(t) = \emptyset$ for all $t \in \C$, $\mb X := \set{A, B}$, and let $\R(p)$ have only one element $(\emptyset_A, \emptyset_B)$, where each $\emptyset_x$ is the null ordering. This trivially satisfies Existence (as it makes no constraints), Extension (as there are no constraints to extend), Reducibility (as there are no constraints to reduce), and Independence of Irrelevant Calls (as it is constant). It does not satisfy Priority, as for any $(\succ^*_A, \succ^*_B)$ where each $\succ^*_x$ is a strict partial ordering over $\C_x(p)$ (all such orderings are admissible in this model), and some $\succ^*_x$ is not the null ordering, there does not exist a $(\succ_A, \succ_B) \in \R(p)$ where $\succ_x = \succ^*$ when restricted to $\C_x(p)$. 
    \end{proof}

    \begin{lemma}
        There exists a model that does not satisfy Extension, but satisfies Existence, Priority, Reducibility, and Independence of Irrelevant Calls.
    \end{lemma}

    \begin{proof}
        Let $\C := \set{a, a', b}$ and $X := \set{A, B}$ where $r(a) = r(b) = \emptyset$ and $r(a') = a$. We then define $\R(p)$ to be a rights system where we include all $(\succ_A, \succ_B)$ such that $\succ_A$ is exactly a strict partial order on $\C_A(p)$ and $\succ_B$ is exactly a strict partial order on $\C_B(p)$, with no constraints $c \succ_x c'$ allowed where $p(c) \ne p(c')$ for any $x \in \mb X$. 

        This system satisfies Existence, as we can arbitrarily order the contracts, and then order any sequence of transactions $(t_1, \dots ,t_N)$ first by their contract under $p$, and then by the partial order chosen by that contract. The system trivially satisfies Priority and Reducibility (as it only allows $c \succ_x c'$ if $p(c') = x$). Finally, it satisfies Independence of Irrelevant Calls, as a contract $x$'s choice to sequence $c \succ_x c'$ depends only on $p(c)$ and $p(c')$. 

        However, $\R(p)$ does not satisfy Extension. Choosing \[p^*(c) := \begin{cases}A & c = a'\\B & c = \set{a,b} \end{cases}\]
        (which can be easily verified as deployable), while there exists a $\succ_B$ such that $b \succ_B a$, there is no $\succ_B$ such that $b \succ_B a'$ despite the fact that $a \in \tr(a')$.
    \end{proof}

    \begin{lemma}
        There exists a model that does not satisfy Reducibility, but satisfies Existence, Priority, Extension, and Independence of Irrelevant Calls.
    \end{lemma}

    \begin{proof}
        Let $\C := \set{a', a, b}$ and $\mb X := \set{A, B}$. Let $r(a) = r(b) = \emptyset$ and $r(a') = \set{a}$. Given any deployable $p: \C \to \mb X$, we define a rights system $\R(p)$ as follows. Contract $A$ can select any admissible strict partial order $\succ_A$ over $\C_A(p)$. Then, for any $c \in \C_A(p)$ and $c' \in \C_B(p)$, we let $c \succ_A c'$ if and only if there exists a $c'' \in \C_A(p)$ such that $c' \in \tr(c'')$. Contract $B$ similarly selects a $\succ_B$. However, $B$ can additionally (optionally) choose to add any constraints of the form $c \succ_B c'$, where $p(c) = B$, $p(c') \ne B$, $c \not \in \tr(c')$, and there exists some $c'' \in \tr(c)$ such that $p(c'') = B$. For any such constraints that are added, the constraints $c \succ_b c''$ must also be added, for any $c''$ such that $c' \in \tr(c'')$. If the corresponding $\succ_B$ remains a strict partial order, then $B$ may choose it.

        The system satisfies Existence, because there can be at most one constraint of the form $c \succ_x c'$ where $p(c) \ne p(c')$. This follows from the fact that such a constraint can only arise if there exists some $c'' \in \tr(c')$ such that $p(c'') = p(c)$. There is only one pair of calls ($a$ and $a'$) for which this relationship potentially holds true, and thus such a $c$ and $c'$ cannot exist for both contracts at once. As all remaining constraints are consistent with respect to a partial order on $\C_x$, there exists a total ordering that respects all constraints as desired. The system trivially satisfies Priority and Extension by construction. It also satisfies Independence of Irrelevant Calls, since a contract $x$'s ability to add the constraint $c \succ_x c'$ does not change for any $p'$ that is unchanged on $\set{c, c'} \cup \tr(c')$.

        However, $\R(p^*)$ does not satisfy Reducibility for 
        \[p^*(c) := \begin{cases}A & c = a'\\B & c = \set{a,b} \end{cases}\]
        This is because we can let $a \succ_B b \succ_B a'$ under $\R(p^*)$. This contradicts Reducibility, since $b \succ_B a'$, but there does not exist a $c \in \tr(a')$ such that $b \succ_B c$. 
    \end{proof}

    \begin{lemma}
        There exists a model that does not satisfy Independence of Irrelevant Calls, but satisfies Existence, Priority, Extension, and Reducibility.
    \end{lemma}

    \begin{proof}
        Let $\C := \set{a', a, b}$ and $\mb X := \set{A, B}$. Let $r(a) = r(b) = \emptyset$ and $r(a') = \set{a}$. Given any deployable $p: \C \to \mb X$, we define a rights system $\R(p)$ as follows. Contract $A$ can select any admissible strict partial order $\succ_A$ over $\C_A(p)$. Then, for any $c \in \C_A(p)$ and $c' \in \C_B(p)$, we let $c \succ_A c'$ if and only if there exists a $c'' \in \C_A(p)$ such that $c' \in \tr(c'')$. Contract $B$ similarly selects a $\succ_B$. However, $B$ can additionally (optionally) choose to add the constraints $b \succ_B a$ and $b \succ_B a'$ if and only if $p(b) = A$, $p(a) = B$, and $p(a') = B$. 

        The system satisfies Existence, because if $B$ does not choose to add the extra constraint, then there are only sequencing constraints with respect to child calls of the same contract. If $B$ does choose to add the extra constraint, there cannot be a conflicting constraint $a \succ_A b$ or $a' \succ_A b$ because $p(a) = p(a') \ne A$. The system trivially satisfies Priority and Extension by construction. It also satisfies reducibility, as the original constraints satisfy reducibility, and the extra constraints $b \succ_B a$ and $b \succ_B a'$ can only be added if $p(a) = p(a') = B$. 

        However, $\R(p)$ does not satisfy Independence of Irrelevant Calls, as the constraint $b \succ_B a$ is achievable if $p(a') = B$, but not achievable if $p(a') = A$. Since $a' \not \in \set{a, b} \cup \tr(a) \cup \tr(b)$, Independence of Irrelevant Calls is violated.
    \end{proof}
\end{proof}

\section{Global Ordering}\label{sec:order}
In addition to the axioms we list in Section \ref{sec:model-axioms}, an additional property that enables the tractable block-building process described in Section \ref{sec:intro-mps-sim} is that the sequencing constraints induced by the rights system can be represented by a static global priority over all calls. A global priority order can be thought of as embedding the set of calls $\C$ into a totally ordered set $(A, >)$ such that the total ordering respects all of the sequencing constraints set by each of the contracts.

\begin{definition}[Orderability]
    Let $(A, >)$ be a totally ordered set. A rights system $\R(p)$ is $A$-Orderable if for all deployable $p: \C \to \mb X$ and $(\succ_x)_{x \in \mb X} \in \R(p)$, there exists a $\lambda: \C \to A$ such that for any $c, c' \in \C$ and $x \in \mb X$, $c \succ_x c' \implies \lambda(c) > \lambda(c')$
\end{definition}

It is straightforward to see that Orderability implies Existence. If the set of all calls $\C$ can be totally ordered in a way that respects all of the sequencing constraints selected by contracts under the rights system, then for any finite collection of transactions, a valid block can be built by ordering the transactions by the global ordering. It is tempting to believe that the converse might be true as well: if any finite collection of transactions can be ordered, then perhaps there is also a global ordering that holds over all calls. This is true in a sense, but only if we allow the priority order to embed into the rational numbers $\Q$, which is far too rich to be useful for practical implementations. It is impossible to store arbitrary rational numbers without using infinitely many bits. Thus, for an arbitrary rights system satisfying Existence, there may not be a way to statically assign each call to a global integer priority that would not need to be re-computed each time a new contract is deployed. It is yet an even stronger constraint to require that there is a global integer priority ordering that has a fixed maximum priority $\lambda_{\texttt{max}}$. 

\begin{theorem}\label{thm:order}
A rights system $\R(p)$ satisfies Existence if and only if it is $\Q$-Orderable. However, Existence does not imply $\Z$-Orderability, and $\Z$-Orderability does not imply $\Z_{\le \lambda_{\texttt{max}}}$-Orderability for any $\lambda_{\texttt{max}} \in \Z$. 
\end{theorem}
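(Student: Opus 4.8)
The statement bundles three claims, which I would treat separately. For a fixed configuration $(\succ_x)_{x \in \mb X}$ write $\succ := \bigcup_{x \in \mb X} \succ_x$ for the induced union relation. The key observation I would establish first is that a block ordering $B$ is valid exactly when it is a topological sort of $\succ$ on the relevant finite set, since the validity condition forbids placing $c$ after $c'$ whenever $c \succ c'$. Consequently Existence is equivalent to the assertion that $\succ$ has no finite cycle, i.e. that its transitive closure $\succ^+$ is a strict partial order on $\C$.

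\textbf{Existence $\iff$ $\Q$-Orderability.} The direction $(\Leftarrow)$ is immediate and already noted: given $\lambda:\C \to \Q$ with $c \succ_x c' \Rightarrow \lambda(c) > \lambda(c')$, ordering any finite transaction list by descending $\lambda$ (ties arbitrary) is valid. For $(\Rightarrow)$, I would note that a cycle $c_1 \succ c_2 \succ \cdots \succ c_m \succ c_1$, taken as a finite transaction sequence, admits no valid block, since each constraint forces its left endpoint strictly before its right endpoint and a cyclic chain of ``before'' relations is unsatisfiable; hence Existence makes $\succ^+$ a strict partial order. I would then construct $\lambda$ greedily, using that $\C$ is countable in any implementable setting: enumerate $\C = \{v_1, v_2, \dots\}$, and having consistently placed $v_1, \dots, v_{k-1}$, assign $v_k$ any rational strictly greater than $\lambda(v_j)$ for each placed $v_j$ with $v_k \succ^+ v_j$ and strictly less than $\lambda(v_j)$ for each placed $v_j$ with $v_j \succ^+ v_k$. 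Transitivity of $\succ^+$ and the induction hypothesis force every ``below'' value to lie under every ``above'' value, so such a rational exists by the density and unboundedness of $\Q$; this yields the global potential $\lambda$.

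\textbf{The two separations.} For both I would use a single-contract device, which makes Existence automatic: with $\mb X = \{X\}$ the induced relation $\succ = \succ_X$ is itself a strict partial order and hence acyclic, and the unique, $>_D$-maximal contract $X$ is exempt from the finiteness requirement, so $p \equiv X$ is deployable even with infinitely many calls (take $r \equiv \emptyset$). To defeat $\Z$-Orderability, let $\C = \{b, c_1, c_2, \dots\}$ and let $\R(p)$ contain the single configuration whose order is the transitive closure of $c_1 \succ_X c_2 \succ_X \cdots$ together with $c_i \succ_X b$ for all $i$; this is $\Q$-orderable via $c_i \mapsto 1/i,\ b \mapsto 0$ (as it must be), but any $\lambda:\C \to \Z$ would force $\lambda(c_1) > \lambda(c_2) > \cdots > \lambda(b)$, an infinite strictly decreasing sequence of integers bounded below, which is impossible. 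To defeat $\Z_{\le \lambda_{\texttt{max}}}$-Orderability for every cap at once, I would place finite descending chains of every length above a fixed bottom: $\C = \{b\} \cup \{a^{(n)}_i : n \ge 1,\ 1 \le i \le n\}$ with $a^{(n)}_1 \succ_X \cdots \succ_X a^{(n)}_n \succ_X b$ and no cross-chain relations. Setting $\lambda(a^{(n)}_i) = n - i + 1$ and $\lambda(b) = 0$ shows $\Z$-Orderability; but for any fixed $\lambda_{\texttt{max}}$ and any $\lambda$ into $\Z_{\le \lambda_{\texttt{max}}}$, the length-$n$ chain forces $\lambda(a^{(n)}_1) \ge \lambda(b) + n$, and since $\lambda(b)$ is a single integer while chains of all lengths are present, choosing $n > \lambda_{\texttt{max}} - \lambda(b)$ breaches the cap.

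The main obstacle is the forward direction of the equivalence: identifying Existence with acyclicity of the induced relation and then manufacturing a genuinely global $\Q$-valued potential. The greedy construction works precisely because $\Q$ is both dense and unbounded, which is exactly the property the separations exploit — a bounded infinite descending chain defeats $\Z$, and unbounded chain-heights over a fixed floor defeat any integer cap. The remaining bookkeeping, namely deployability of the single-contract maps and the countability of $\C$, is routine.
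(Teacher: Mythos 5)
Your proof is correct. For the equivalence, your argument has the same skeleton as the paper's: both directions identify Existence with acyclicity of the union relation (a finite cycle, fed back in as a transaction sequence, kills Existence), and then produce a rational-valued potential for the resulting strict partial order. The only cosmetic difference is that the paper first extends $\succ^+$ to a total order and then invokes the standard fact that countable total orders embed into $\Q$, whereas you build the embedding directly by the greedy/back-and-forth insertion; your version is self-contained where the paper's is cited, but it is the same idea. Where you genuinely diverge is in the two separations. The paper constructs only the first counterexample explicitly, using a $\Z$-indexed chain sandwiched between a top element $c_0$ and a bottom element $c_F$ (so that the gap $\lambda(c_0)-\lambda(c_F)$ would have to exceed every finite $G$), and dismisses the second separation as following from well-known facts about order embeddings. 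Your examples are simpler and more complete: an $\omega$-descending chain $c_1 \succ c_2 \succ \cdots$ sitting above a single floor $b$ defeats $\Z$ with a one-line ``infinite strictly decreasing sequence of integers bounded below'' argument, and the family of finite descending chains of every length over a common floor gives an explicit witness for the second separation, which the paper never writes down. Your observation that a single contract is exempt from the finiteness requirement on $\abs{\C_x(p)}$, making $p \equiv X$ deployable with infinitely many calls, correctly discharges the one piece of bookkeeping these constructions need.
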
 

\begin{proof}
    We first show that $\Q$-Orderability implies Existence. Let $\R(p)$ be a $\Q$-Orderable rights system. For any $(\succ_x)_{x \in \mb X}$, there must exist a $\lambda: \C \to \Q$ such that for any $c, c' \in \C$ and $x \in \mb X$, $c \succ_x c' \implies \lambda(c) > \lambda(c')$. Next, let $(t_1, \dots, t_N)$ be any finite sequence of calls. Select any $B: [N] \longleftarrow [N]$ that sorts the $t_i$ in descending order by $\lambda(t_i)$. It follows that there does not exist an $i < j$ and $x \in \mb X$ such that $t_{B(j)} \succ_x t_{B(i)}$, whence the desired result follows.

    Next, we show that Existence implies $\Q$-Orderability. First, notice that for any deployable $p$ and any $(\succ_x)_{x \ in \mb X} \in \R(p)$, the binary relation $\succ$ given by the closure of $\bigcup_{x \in \mb X} \succ_x$ under transitivity must be a strict partial order. This is because if it were not irreflexive or antisymmetric, then there must exist some $c_1, \dots, c_k \in \C$ and $x_1, \dots, x_k \in \mb X$ such that 
    \[c_1 \succ_{x_1} \cdots c_k \succ_{x_k} c_1\]
    However, then the rights system $\R(p)$ would fail Existence for the sequence of calls $(c_1, \dots, c_k)$. It is well-known that any strict partial order $\succ$ over $\C$ can be extended to a total order $>$ over $\C$. It is further well-known that any total order over a countable set can be embedded into $\Q$. The result then follows.

    The remaining two statements further follow straightforwardly from well-known results that subsets of $\Q$ cannot necessarily be embedded into $\Z$ while preserving order (and similarly, $\Z$ cannot be embedded into $\Z_{\le \lambda_{\texttt{max}}}$ while preserving order). However, to capture the intuition of these arguments, we explicitly give a counterexample that shows why a rights system that satisfies Existence may not be $\Z$-Orderable.

    To see this, let $\C$ be a countably infinite set, and let $\R(p)$ consist of only one element $(\succ_x)_{x \in \mb X}$. This element is defined as follows: we first let $x^* \in \mb X$ be a specific element of $\mb X$. For all $x \ne x^*$, we define $\succ_x := \emptyset$ to be the null ordering over $\C$ (i.e. all elements of $\C$ are considered incomparable to each other). We then define $\succ_{x^*}$ as follows. First, we let $c_0, c_F \in \C$ to be arbitrary calls in $\C$. Then, we define $\mu: \C\setminus\set{c_0, c_F} \longleftrightarrow \Z$ to be any bijection. Such a bijection must exist as $\C\setminus\set{c_0, c_F}$ is countable. We then define, for any $c, c' \in \C$:
    \[c \succ_{x^*} c' \iff \mu(c) > \mu(c') \text{ or } \bra{c = c_0 \text{ and } c' \ne c_0} \text{ or } \bra{c \ne c_F \text{ and } c' = c_F}\]
    To see that $\R(p)$ satisfies existence, notice that for any finite sequence of calls $(t_1, \dots, t_N)$, we can find an appropriate bijection $B$ by first sorting all $t_i \not \in \set{c_0, c_F}$ in descending order by $\mu(t_i)$, and then sequencing all $t_i = c_0$ at the beginning, and all $t_i = c_F$ at the end. However, this $\R(p)$ does not satisfy $\Z$-Orderability. To see this, suppose for contradiction that it was, with $\lambda: \C \to \Z$ as the corresponding ordering. Then, $G := \lambda(c_0) - \lambda(c_F) < \infty$. However, since $\mu$ is a bijection, for any such $G$, there exist $G + 1$ calls $c_1, \dots, c_{G+1} \in \C \setminus \set{c_0, c_F}$ such that
    \[c_0 \succ_{x^*} c_1 \succ_{x^*} \cdots \succ_{x^*} c_{G+1} \succ_{x^*} c_F\]
    It follows that $\lambda(c_{G+1}) \le \lambda(c_0) - G$, but then we must have that $\lambda(c_{G+1}) \le \lambda(c_F)$ and we have arrived at a contradiction. 
\end{proof}

\subsection{Monotone Priorities}\label{sec:order-mps}
Although Existence does not imply $\Z_{\le \lambda_{\texttt{max}}}$-Orderability, in this subsection, we show how the five axioms Existence, Priority, Extension, Reducibility, and Independence of Irrelevant Calls do jointly imply this nice property. We do this concretely by showing that $\R^*(p)$, the unique rights system that satisfies all five axioms, is $\Z_{\le \lambda_{\texttt{max}}}$-Orderable. We show this constructively by formally associating $\R^*(p)$ with the Monotone Priority System as described in Section \ref{sec:intro-mps}.

Let $\lambda_{\texttt{max}} \in \Z$ be an integer denoting the maximum priority that a call can have. We define a Monotone Priority System $\lambda: \C \to \Z_{\le \lambda_{\texttt{max}}}$ to be a map that defines global integer priorities over all calls, such that for any call $c \in \C$, $\lambda(c) \le \min\set{\lambda(c') \mid c' \in \tr(c)}$. Notice that this definition coincides with the $\texttt{VALID}$ definition from Section \ref{sec:intro-mps}. We first show that any element of $\R^*(p)$ is orderable by a Monotone Priority System.
\begin{theorem}\label{thm:mpsexist}
    For any $p: \C \to \mb X$ with deployment order $>_D$ and $(\succ_x)_{x \in \mb X} \in \R^*(p)$, there exists a Monotone Priority System $\lambda: \C \to \Z_{\le \lambda_{\texttt{max}}}$ such that for any $x \in \mb X$ and $t, t' \in \C_x(p)$, $t \succ_x t' \implies \lambda(t) > \lambda(t')$.
\end{theorem}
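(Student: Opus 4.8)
The plan is to build the priority map $\lambda$ as a ``level function'' on a weighted dependency digraph and then read off the required inequalities. Any $(\succ_x)_{x \in \mb X} \in \R^*(p)$ is, by construction, determined by a family of admissible strict partial orders $(\succ^*_x)_{x \in \mb X}$, with $\succ^*_x$ on $\C_x(p)$. Since the theorem only asks for $\lambda(t) > \lambda(t')$ when $t, t' \in \C_x(p)$ --- i.e.\ exactly when $t \succ^*_x t'$ --- it suffices to produce a map $\lambda : \C \to \Z_{\le \lambda_{\texttt{max}}}$ that (i) is a Monotone Priority System, meaning $\lambda(c) \le \lambda(c')$ for every $c' \in r(c)$ (equivalently for every $c' \in \tr(c)$), and (ii) satisfies $\lambda(t) > \lambda(t')$ whenever $t \succ^*_x t'$. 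The remaining (cross-contract) part of $\succ_x$ then follows automatically, since $c \in \tr(t')$ forces $\lambda(t') \le \lambda(c)$.

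To this end I would form the digraph $D$ on $\C$ with a weight-$0$ arc $c' \to c$ for each $c' \in r(c)$ and a weight-$1$ arc $t \to t'$ for each $t \succ^*_{p(t)} t'$, and set $\lambda(c) := \lambda_{\texttt{max}} - L(c)$, where $L(c)$ is the supremum of the total weights of directed paths in $D$ ending at $c$. A direct check shows that for every arc $u \to v$ of weight $w$ one has $L(v) \ge L(u) + w$, so $\lambda(u) \ge \lambda(v) + w$; the weight-$0$ arcs then give property (i) and the weight-$1$ arcs give property (ii), while $L(c) \ge 0$ guarantees $\lambda(c) \le \lambda_{\texttt{max}}$.

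The substance of the argument is showing that $L(c)$ is finite for every $c$, and this is exactly where deployability enters --- indeed Theorem \ref{thm:order} shows that Existence alone permits the unbounded-chain behaviour that defeats $\Z$-orderability. Two ingredients are needed. First, $D$ has no positive-weight cycle: every arc weakly increases the deployment rank of the underlying contract (strict arcs stay within one contract, and a reference arc $c' \to c$ satisfies $p(c') \le_D p(c)$ because $c' \in r(c) \subseteq \tr(c)$ forces this via the contrapositive of the deployment property), so any cycle lies inside a single $\C_x(p)$, where admissibility of $\succ^*_x$ (no $c \succ^*_x c'$ with $c' \in \tr(c)$) rules out a cycle carrying a strict arc. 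Second, the weight of any path ending at $c$ must be bounded: because the deployment rank is monotone along arcs, each contract is traversed in a single contiguous block that cannot be revisited, and within such a block the number of weight-$1$ arcs is at most the height of $\succ^*_x$ on $\C_x(p)$.

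The main obstacle is precisely this finiteness step: one must rule out paths that accumulate unboundedly many weight-$1$ arcs, either by descending through infinitely many lower contracts or through a long chain inside a single large contract. I would control the first possibility using deployment-rank monotonicity together with the finiteness of every non-maximal contract (so only boundedly many strict arcs can be gathered before the rank must strictly increase, and no rank recurs), and handle the at-most-one $>_D$-maximal contract separately, noting that descent within it can only push $\lambda$ downward and so stays compatible with the upper bound $\lambda_{\texttt{max}}$. Once $L$ is shown to be finite on all of $\C$, substituting $\lambda = \lambda_{\texttt{max}} - L$ back into the arc inequalities verifies both the Monotone Priority System condition and the strict ordering $t \succ^*_x t' \implies \lambda(t) > \lambda(t')$, completing the proof.
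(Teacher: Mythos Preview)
Your weighted-digraph construction is a different route from the paper's inductive one (which processes contracts in deployment order and, within each, assigns priorities along a linear extension of $\succ^*_x$), and it is arguably cleaner in that the two required inequalities drop out uniformly from the arc weights.

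The gap is in the claim that admissibility rules out positive-weight cycles in $D$. Your argument covers only the case where a single strict arc is closed up by a pure reference path; it does not handle cycles that interleave several strict arcs with reference arcs. Concretely, take one contract $x$ with calls $a,b,c,d$, references $r(a)=\{d\}$, $r(c)=\{b\}$, $r(b)=r(d)=\emptyset$, and the order given by $a \succ^*_x b$ and $c \succ^*_x d$ only. This is admissible, since $b\notin\tr(a)=\{a,d\}$ and $d\notin\tr(c)=\{c,b\}$. Yet $D$ contains the weight-$2$ cycle $a\to b\to c\to d\to a$ (strict, reference, strict, reference), so $L\equiv +\infty$ on these four calls. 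Worse, no map $\lambda$ can satisfy both requirements here at all: the Monotone Priority constraint forces $\lambda(a)\le\lambda(d)$ and $\lambda(c)\le\lambda(b)$, while the theorem's conclusion demands $\lambda(a)>\lambda(b)$ and $\lambda(c)>\lambda(d)$, giving $\lambda(a)\le\lambda(d)<\lambda(c)\le\lambda(b)<\lambda(a)$.

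So this is not merely a gap in your write-up but in the statement under the reading of admissibility you (and, apparently, the paper) use. The paper's own proof slips at the same point: the sentence ``as $\succ_{x_0}$ must be admissible \dots\ it must be true that $\lambda(c)\le\min\{\lambda(c'):c'\in\tr(c)\}$'' presumes that the chosen linear extension of $\succ^*_{x_0}$ also refines the intra-contract trace order, and this example shows such an extension need not exist. The natural repairs are either to read the model as forbidding intra-contract references (so that $\tr(c)\cap\C_{p(c)}(p)\subseteq\{c\}$, which is what the inductive well-definedness claims in the paper's proof tacitly assume), or to strengthen admissibility so that the transitive closure of $\succ^*_x$ together with the relation ``$c'\in\tr(c)$'' is still a strict partial order on $\C_x(p)$. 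Under either reading your longest-path argument and the paper's induction both go through essentially unchanged.
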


\begin{proof}
    Let $(\succ_x)_{x \in \mb X} \in \R^*(p)$. We define $\lambda$ inductively using the deployment order $>_D$. Let $x_0, x_1, \dots \in \mb X$ be the unique ordering of all contracts in $\mb X$ such that $x_i >_D x_j$ if $i < j$. Noting that $\bigcup_{x \in \mb X} \C_x(p) = \C$, we constructively define $\lambda$ by defining its value on each $\C_x(p)$. 
    
    First, we let $\set{c_1^0, \dots, c_{N_0}^0} := \C_{x_0}(p)$ be the child calls of $x_0$ under $p$. Further, let $c_1^0, \dots, c_{N_0}^0$ be such that
        \[c_i^0 \succ_{x_0} c_j^0 \implies i < j\]
        That such an ordering exists follows directly from the fact that any strict partial order can be extended to a total order. We then define $\lambda(c_i^0) := \lambda_{\texttt{max}}-i$ for all $i \in [N_0]$. Notice that as $\succ_{x_0}$ must be admissible when restricted to $\C_{x_0}(p)$, it must be true that $\lambda(c) \le \min\set{\lambda(c') \mid c' \in \tr(c)}$ for all $c \in \C_{x_0}(p)$. Thus, in addition to defining $\lambda$ for all calls in $\C_{x_0}(p)$, we have also shown that $\lambda$ is a valid Monotone Priority System when restricted to these calls.
        
        Next, for any $k > 0$ such that $x_k$ is not terminal in the deployment ordering, let $\set{c_1^k, \dots, c_{N_k}^k} := \C_{x_k}(p)$. Again, further ensure that 
         \[c_i^k \succ_{x_k} c_j^k \implies i < j\]
         We now inductively define $\lambda(c_i)$. Starting with $c_1^k$, we define 
         \[\lambda(c_1^k) := \min\set{\lambda(c) \mid c \in \tr(c_1^k)},\]
         defining it as $\lambda_{\texttt{max}}- 1$ in the case that $\tr(c_1^k) = \emptyset$. This is well-defined as we have already inductively defined $\lambda$ for any such $c \in \tr(c_1^k)$. Next, for any $1 < i < N_k$, we define
         \[\lambda(c_i^k) := \min\pa{\lambda(c^k_{i-1}) - 1, \min\set{\lambda(c) \mid c \in \tr(c^k_i)}}\]
         Note that this is also well-defined as $\succ_x$ is admissible when restricted to $\C_{x_k}(p)$, and we have therefore inductively defined $\lambda$ in all cases where we reference it. Further, notice that it must further be true that $\lambda(c) \le \min\set{\lambda(c') \mid c' \in \tr(c)}$ for all $c \in \C_{x_k}(p)$. Thus, we have also shown that $\lambda$ is a valid Monotone Priority System when restricted to these calls.

         Finally, for the terminal contract $x_F$ (should it exist), we let $\set{c_1^F, c_2^F, \dots} := \C_{x_F}(p)$. Again, ensure that 
         \[c_i^F \succ_{x_F} c_j^F \implies i < j\]
         We then define $\lambda(c_i^F)$ inductively in the exact same way as before, noting that at each inductive step, the minima remain well defined as there are still only finite terms within it. An identical argument shows that $\lambda$ is a valid Monotone Priority System when restricted to $\C_{x_F}(p)$. It follows that $\lambda$ is indeed a Monotone Priority System.

         Next, we show that for this $\lambda$, if $t \succ_x t'$ for some $x \in \mb X$ and $t, t' \in \C$, then $\lambda(t) > \lambda(t')$. We prove this by splitting into two exhaustive cases.

         \begin{case}
             $p(t) = p(t') = x$
         \end{case}

         In this case, we have that $x = x_k$ for some $k \in \N \cup \set{F}$, and $t = c_i^k$ and $t' = c_j^k$ for $i < j$. Notice that in both the base and inductive case, we ensure that $\lambda(c_i^k) > \lambda(c_j^k)$ for $i < j$. Thus, $\lambda(t) > \lambda(t')$ as desired.

         \begin{case}
             $p(t) = x$ and $p(t') \ne x$, and there exists a $c \in \tr(t')$ with $p(c) = x$ such that $t \succ_x c$
         \end{case}

         In this case, we have that $x = x_k$ and $p(t') = x_{k'}$ for some $k,k' \in \N \cup \set{F}$, and $t = c_i^k$, $c = c_j^k$ with $i < j$, and $t' = c_\ell^{k'}$. Noting that we must have $k' > k$ by the property of the deployment order (with $F$ being notated here as greater than all natural numbers), it follows that in the inductive step, \[\lambda(t) = \lambda(c_i^k) > \lambda(c_j^k) \ge \lambda(c_\ell^{k'}) = \lambda(t'),\]
         whence the desired result follows.
\end{proof}

Next, we show that there are no ``extra'' Monotone Priority Systems, in that every Monotone Priority System in fact globally orders an element of $\R^*(p)$. 
\begin{theorem}\label{thm:mpsimp}
    For any $p: \C \to \mb X$ with deployment order $>_D$, and Monotone Priority System $\lambda: \C \to \Z_{\lambda_{\texttt{max}}}$, there exists a $(\succ_x)_{x \in \mb X} \in \R^*(p)$ such that for any $x \in \mb X$ and  $c, c' \in \C$, $c \succ_x c' \implies \lambda(c) > \lambda(c')$
\end{theorem}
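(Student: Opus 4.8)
The plan is to exhibit the desired element of $\R^*(p)$ explicitly, by reading off each contract's admissible partial order directly from the priorities $\lambda$. Recall that an element of $\R^*(p)$ is in bijection with a choice of admissible strict partial order $\succ^*_x$ on each $\C_x(p)$. The natural candidate is to set, for every $x \in \mb X$ and all $c, c' \in \C_x(p)$,
\[c \succ^*_x c' \iff \lambda(c) > \lambda(c').\]

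First I would check that each $\succ^*_x$ is a genuine admissible strict partial order. That it is a strict partial order is immediate: it is the pullback of the strict total order $>$ on $\Z_{\le \lambda_{\texttt{max}}}$ along $\lambda$, so it inherits irreflexivity, asymmetry, and transitivity, with calls of equal priority simply left incomparable. Admissibility is the first place the Monotone Priority condition enters: if $c, c' \in \C_x(p)$ with $c' \in \tr(c)$, then the defining inequality $\lambda(c) \le \min\set{\lambda(d) \mid d \in \tr(c)} \le \lambda(c')$ forbids $\lambda(c) > \lambda(c')$, hence forbids $c \succ^*_x c'$. So no admissibility violation can occur, and this choice of $(\succ^*_x)_{x \in \mb X}$ determines a well-defined $(\succ_x)_{x \in \mb X} \in \R^*(p)$.

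It then remains to verify the ordering implication $c \succ_x c' \implies \lambda(c) > \lambda(c')$ for all $c, c' \in \C$, which I would do by unwinding the two defining clauses of $\R^*(p)$. Under the first clause, $p(c) = p(c') = x$ and $c \succ^*_x c'$, so $\lambda(c) > \lambda(c')$ holds by construction. Under the second clause, $p(c) = x$, $p(c') \ne x$, and there is some $c'' \in \tr(c')$ with $p(c'') = x$ and $c \succ^*_x c''$, that is $\lambda(c) > \lambda(c'')$; here I invoke the Monotone Priority condition on $c'$ to obtain $\lambda(c') \le \lambda(c'')$, and chain the two inequalities to conclude $\lambda(c) > \lambda(c'') \ge \lambda(c')$.

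I do not expect any substantial obstacle in this direction: it is dual to Theorem \ref{thm:mpsexist} but strictly easier, since $\lambda$ is handed to us rather than constructed, so no inductive assignment along the deployment order is needed. The only two points requiring care are the two appeals to the Monotone Priority condition $\lambda(c) \le \min\set{\lambda(d) \mid d \in \tr(c)}$ — once to guarantee admissibility of the reconstructed partial orders, and once to handle the cross-contract clause where the witnessing sub-call $c''$ lies in $\tr(c')$. Both reduce to one-line inequality chains, so the whole argument is short.
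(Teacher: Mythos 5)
Your proposal is correct and follows essentially the same route as the paper's proof: define $\succ^*_x$ on each $\C_x(p)$ by pulling back the order on $\Z_{\le \lambda_{\texttt{max}}}$ along $\lambda$, take the associated element of $\R^*(p)$, and check the two defining clauses, using the Monotone Priority condition $\lambda(c') \le \lambda(c'')$ for the cross-contract case. Your explicit verification of admissibility is a detail the paper merely asserts, but the argument is the same.
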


\begin{proof}
    For any $x \in \mb X$ and $t, t' \in \C_x(p)$, notice first that there exists an admissible $\succ_x^*$ over $\C_x(p)$ such that $\lambda(t) > \lambda(t') \iff t \succ_x^* t'$. Let $(\succ_x)_{x \in \mb X}$ be the unique element of $\R^*(p)$ that is associated with $(\succ_x^*)_{x \in \mb X}$. Next, let $x \in \mb X$ be an arbitrary contract, and let $c, c' \in \C$ be any calls such that $c \succ_x c'$. We show that $\lambda(c) > \lambda(c')$. We prove this by splitting into two exhaustive cases.

         \begin{case}
             $p(c) = p(c') = x$
         \end{case}

         In this case, we have by construction that $c \succ_x c' \implies c \succ_x^* c' \implies \lambda(c) > \lambda(c')$.

         \begin{case}
             $p(c) = x$ and $p(c') \ne x$, and there exists a $f \in \tr(c')$ with $p(f) = x$ such that $c \succ_x f$
         \end{case}

         In this case, we have that $c \succ_x f \implies c \succ_x^* f \implies \lambda(c) > \lambda(f)$. Next, by the definition of a Monotone Priority System, it follows that $\lambda(c') \le \lambda(f)$, whence the desired result follows. This proves the first part of the Theorem.
\end{proof}

\section{Discussion}\label{sec:dis}

In this work, we defined the Monotone Priority System, and showed that it is the unique system that satisfies five natural axioms. The system allows contracts to non-trivially specify additional sequencing constraints as a part of block validity while remaining simple enough to plausibly be implemented on current infrastructure. We next give a more detailed discussion of our five axioms. 

Theorem \ref{thm:ind} implies that if any one of our five axioms are dropped, there are other sequencing systems besides the Monotone Priority System that satisfy the other four. Are any of these systems worth considering? And if so, which axiom(s) would any such systems not satisfy? 

\paragraph{Crucial Axioms: Existence, Priority, and Extension.} It is very difficult to imagine any reasonable system not satisfying all three of these axioms. Losing Existence would result in adding meaningful additional constraints on transaction \textit{inclusion}, as opposed to simply imposing constraints on sequencing. It is difficult to imagine that being tractable. A system without Priority would lack a baseline level of expressiveness. A system without Extension is simply not enforceable, as users can then easily get around sequencing constraints by making new contracts. 

With that said, perhaps a reasonable system may exist that instead has a modified Extension property. For example, one could consider allowing contracts to specify inadmissible partial orderings over their constituent calls, requiring instead that the Extension property only hold for calls that are in other contracts. The system does become more complicated in this case: suppose that a contract $x$ with two calls $a$ and $b$ where $b \in \tr(a)$ prioritizes $a \succ_x b$ (and is therefore inadmissible). Further, suppose that another call $c$ in contract $x'$ references $a$. There is a nuanced question of determining whether the system should enforce that $a \succ_x c$. This is still a reducible constraint, as $b \in \tr(c)$ (since it is in $\tr(a)$ and $a \in \tr(c)$) and $a \succ_x b$. However, modifying the Monotone Priority System to accommodate this creates complication, as now there are restrictions on the contract developer's selections of priorities that depends not only on calls that are directly referenced, but also those that are indirectly referenced. If instead, the system does not enforce $a \succ_x c$, then we can keep a simple priority ordering system like the Monotone Priority System, but the exact Extension guarantee that is made becomes more complicated, though potentially still practical. 

\paragraph{Negotiable Axioms: Reducibility, Independence of Irrelevant Calls.} Losing either of these axioms would result in systems where it is meaningfully more complicated to understand, from the perspective of a contract developer, how modifications to the contract's code will impact sequencing. However, these axioms seem potentially less immediately necessary for a practical system than the other three.\\

The existence of a static, global integer ordering (as discussed in Section \ref{sec:order}) is also a very convenient property to have to keep block building tractable. It allows the simple sequencing algorithm discussed in Section \ref{sec:intro-mps-sim} to find a valid block given a collection of transactions. While it's possible that there could be a reasonable rights system that lacks a global integer priority ordering, but still admits a practical sequencing algorithm algorithm, losing this property would make block building meaningfully less tractable.

A final hidden axiom throughout this paper is statelessness. It is difficult to imagine being able to express state-dependent sequencing constraints while keeping block building tractable, as the relative sequencing of transactions affects the underlying state. We leave open the question of whether another sequencing system that makes different, but practically relevant tradeoffs exists. 

\section{Acknowledgments}

I would like to thank Noam Nisan and Tim Roughgarden for helpful discussions regarding the description and framing of this system.

\bibliographystyle{alpha}
\bibliography{refs}

@article{wood2014ethereum,
  title={Ethereum: A secure decentralised generalised transaction ledger},
  author={Wood, Gavin and others},
  journal={Ethereum project yellow paper},
  volume={151},
  number={2014},
  pages={1--32},
  year={2014}
}

@article{bahrani2024transaction,
  title={Transaction fee mechanism design in a post-mev world},
  author={Bahrani, Maryam and Garimidi, Pranav and Roughgarden, Tim},
  journal={Cryptology ePrint Archive},
  year={2024}
}

@inproceedings{heimbach2023ethereum,
  title={Ethereum's Proposer-Builder Separation: Promises and Realities},
  author={Heimbach, Lioba and Kiffer, Lucianna and Ferreira Torres, Christof and Wattenhofer, Roger},
  booktitle={Proceedings of the 2023 ACM on Internet Measurement Conference},
  pages={406--420},
  year={2023}
}

@misc{hyperliquid2024latency_tx_ordering,
  author       = {Hyperliquid},
  title        = {Latency and transaction ordering on {Hyperliquid}},
  howpublished = {\url{https://hyperliquid.medium.com/latency-and-transaction-ordering-on-hyperliquid-cf28df3648eb}},
  year         = {2024},
  month        = jan,
  note         = {Medium blog post. Published Jan 6, 2024.}
}

@misc{cavey2025ace_amqs,
  author       = {Cavey and Pov{\v{s}}i{\v{c}}, Jakob and Resnick, Max},
  title        = {Application Controlled Execution ({ACE}) through Asynchronous Market Queues ({AMQs})},
  howpublished = {\url{https://www.temporal.xyz/writings/application-controlled-execution-ace-through-asynchronous-market-queues-amqs}},
  year         = {2025},
  month        = sep,
  note         = {Temporal writing (Research). Published Sep 22, 2025.}
}

@misc{yakovenko2025internet_capital_markets_roadmap,
  author       = {Yakovenko, Anatoly and Resnick, Max and Bruder, Lucas and Federa, Austin and Heaney, Chris and Samani, Kyle},
  title        = {The Internet Capital Markets Roadmap},
  howpublished = {\url{https://www.anza.xyz/blog/the-internet-capital-markets-roadmap}},
  year         = {2025},
  month        = jul,
  note         = {Anza blog post. Published Jul 24, 2025.}
}

@article{baldauf2020high,
  title={High-frequency trading and market performance},
  author={Baldauf, Markus and Mollner, Joshua},
  journal={The Journal of Finance},
  volume={75},
  number={3},
  pages={1495--1526},
  year={2020},
  publisher={Wiley Online Library}
}

@inproceedings{kelkar2020order,
  title={Order-fairness for byzantine consensus},
  author={Kelkar, Mahimna and Zhang, Fan and Goldfeder, Steven and Juels, Ari},
  booktitle={Annual International Cryptology Conference},
  pages={451--480},
  year={2020},
  organization={Springer}
}

@article{malkhi2022maximal,
  title={Maximal extractable value (mev) protection on a dag},
  author={Malkhi, Dahlia and Szalachowski, Pawel},
  journal={arXiv preprint arXiv:2208.00940},
  year={2022}
}

@misc{sorellalabs2024neweraass,
  author       = {{Sorella Labs}},
  title        = {A New Era of {DeFi} with {App-Specific Sequencing}},
  howpublished = {\url{https://sorellalabs.xyz/writing/a-new-era-of-defi-with-ass}},
  year         = {2024},
  month        = {October},
  note         = {Published October 14, 2024.},
  url          = {https://sorellalabs.xyz/writing/a-new-era-of-defi-with-ass}
}

@misc{eclipselabs2024deepdiveass,
  author       = {{Eclipse Labs}},
  title        = {A Deep Dive into {Application-Specific Sequencing}},
  howpublished = {\url{https://www.eclipse.xyz/articles/a-deep-dive-into-application-specific-sequencing}},
  year         = {2024},
  month        = {October},
  note         = {Published October 9, 2024.},
  url          = {https://www.eclipse.xyz/articles/a-deep-dive-into-application-specific-sequencing}
}

\end{document}